\newtheorem{fact}{Fact}
\journalname{Cryptography and Communications}
\begin{document}

\title{The Fourier Spectral Characterization for the Correlation-Immune Functions over $\mathbb{F}_p$ \thanks{Z. Wang was supported in part by NSFC (No. 61671013, 61672410), NSF of Shaanxi Province (No. 2018JM6076), and the Programme of Introducing Talents of Discipline to Universities (China 111 Project, No. B16037)}
}


\author{Zilong Wang         \and
        Jinjin Chai
        \and
        Guang Gong
}


\institute{Zilong Wang and Jinjin Chai\at
              State Key Laboratory of Integrated Service Networks\\
              School of Cyber Engineering, Xidian University \\
              \email{zlwang@xidian.edu.cn,\ jj\_chai@163.com}           
           \and
           Guang Gong \at
              Department of Electrical and Computer Engineering\\
              University of Waterloo\\
              \email{ggong@uwaterloo.ca}\\
}

\date{Received: date / Accepted: date}

\maketitle

\begin{abstract}
The correlation-immune functions serve as an important metric for measuring resistance of a cryptosystem  against correlation attacks. Existing literature emphasize on matrices, orthogonal arrays and Walsh-Hadamard spectra to characterize the correlation-immune functions over $\mathbb{F}_p$ ($p \geq 2$ is a prime). 
Recently, Wang and Gong investigated  the Fourier spectral characterization over the complex field for correlation-immune Boolean functions. In this paper, the discrete Fourier transform (DFT) of non-binary functions was studied. It was shown that a function $f$ over $\mathbb{F}_p$ is $m$th-order correlation-immune if and only if its Fourier spectrum vanishes at a specific location under any permutation of variables. Moreover, if $f$ is a symmetric function, $f$ is correlation-immune if and only if its Fourier spectrum vanishes at only one location.

\keywords{Discrete Fourier transform \and Correlation-immunity \and Resiliency \and Non-binary function}
\end{abstract}

\section{Introduction}
\label{intro}
Correlation immunity is one of the important properties for cryptographic functions. It was introduced to prevent some cryptosystems from correlation attacks~\cite{siegenthaler1985,Rueppel1986Analysis}(or {\lq divide and conquer attack \rq} proposed by Siegenthaler~\cite{Siegenthaler1984}).
Moreover, correlation-immune functions are closely related to secret-sharing schemes and error-correcting codes~\cite{gopalakrishnan1996applications,Bierbrauer96,wu1996construction,ding1991stability}. The constructions of correlation-immune functions with desired nonlinearity, algebraic immunity, propagation and strict avalanche criteria, such as \cite{Carlet2010Boolean,carlet2018constructing,su2014construction,tang2014class,zhang2014generalized}, just list a few here, were well studied.

There are several methods to characterize correlation-immune Boolean functions. In 1989, Xiao and Massey~\cite{Xiao1988A} showed that a Boolean function is $m$th-order correlation-immune if and only if its Walsh-Hadamard transform vanishes for all points with Hamming weights between $1$ and $m$. A similar concept, called invariants of Boolean functions, which measure the distances between a Boolean function and all affine functions,  was introduced by Golomb~\cite{Golomb1959On}\cite{Golomb1967Shift} in 1959. In fact, invariants are the same concept of Walsh-Hadamard spectral characterization of correlation-immune functions. But the motivation to propose the invariants of Boolean functions was not explicitly mentioned until 1999 \cite{Golomb1999On}. In addition to Walsh-Hadamard spectral characterization, other methods to characterize correlation-immune Boolean functions, such as matrices~\cite{Gopalakrishnan1995}, orthogonal arrays~\cite{Camion1991On,Bierbrauer96}, and Fourier spectra~\cite{WangDiscrete} were proposed.

For non-binary functions, the characterization of correlation immunity was also studied.
Zhang and Xiao~\cite{Zhang1995Spectral} proved that a function $f$ over $\mathbb{F}_p$ for $p$ prime is $m$th-order correlation-immune if and only if its Chrestenson cyclic spectrum vanishes for all points with weights between $1$ and $m$ by generalizing Walsh-Hadamard spectral characterization. Gopalakrishhan and Stinson~\cite{Gopalakrishnan1995} investigated the correlation-immune functions over the finite fields and derived their matrices, spectra, and orthogonal arrays characterizations. Feng~\cite{Feng1999Three} obtained two necessary and sufficient conditions for the correlation-immune functions over $\mathbb{Z}_N$ in terms of Chrestenson linear spectrum and Chrestenson cyclic spectrum. It is clear that the methods to characterize the correlation-immune functions, including Walsh-Hadamard spectra, matrices, orthogonal arrays, were all generalized from the Boolean functions to non-binary functions.

Recently, Wang and Gong~\cite{WangDiscrete} investigated discrete Fourier transform over the complex field of Boolean function $f(\mathbf{x})$, and deduced an equivalent condition for an $m$th-order  correlation-immune Boolean function, that is, the Fourier spectrum of $f(\mathbf{x})$ under any permutation of $n$ variables vanishes at a particular location. Inspired by this result, we try to characterize $m$th-order correlation-immune functions over $\mathbb{F}_p$ for $p$ prime by discrete Fourier transform over the complex field. However, the method used in~\cite{WangDiscrete} cannot be extended to prove the result here. In this paper, we used a different approach than the literature~\cite{WangDiscrete}, and obtained that a function $f$ over $\mathbb{F}_p$ is $m$-order correlation-immune if and only if its Fourier spectrum vanishes at a specific location under any permutation of variables. In particular, a symmetric function $f$ is correlation-immune if and only if its Fourier spectrum vanishes at a specific location. The Fourier spectral characterization for Boolean functions in \cite{WangDiscrete} can be regarded as a special case of the results in this paper for $p=2$.

The rest of this paper is organized as follows. In Section~\ref{sec:1}, we introduce the definitions of the correlation immunity, resiliency and discrete Fourier transform over the complex field of the functions, review previous results on the characterizations of correlation-immune functions over $\mathbb{F}_p$ as well. In Section~\ref{section:s4}, we present the main result on Fourier spectral characterization of correlation-immune functions. In Section~\ref{section:s5}, we show an equivalent condition for $m$-order resilient functions. Section~\ref{section:s6} concludes the paper.

\section{Preliminaries}\label{sec:1}

Let $f$ : $\mathbb{F}^n_p\rightarrow \mathbb{F}_p$ be a function with variable $\mathbf{x}=(x_1,x_2,\cdots ,x_n)$, where $\mathbb{F}_p$ denotes Galois Field with $p$ elements for prime $p$, and $\mathbb{F}^n_p=\{(x_1,x_2,\cdots ,x_n)|x_i \in \mathbb{F}_p,1 \leq i \leq n \}$ represents an $n$-dimensional vector space over $\mathbb{F}_p$.

\subsection{Correlation Immunity and Resiliency}\label{subsec:1}

\begin{definition}\label{definition:d1}
Let $X_1,X_2,\cdots ,X_n$ be independent and uniformly distributed random variables (that is, for any $a \in \mathbb{F}_p$, assume that $P(X_i=a)=p^{-1}$, $1 \leq i \leq n$). A function $f$ is said to be {\em $m$th-order correlation-immune} if the random variable $Z=f(X_1,X_2,\cdots ,X_n)$ is statistically independent of every $m$-subset of random variables chosen from $X_1,X_2,\cdots ,X_n$, which means $$P_r(Z=t|X_{i_j}=a_j,1\leq j\leq m)=P_r(Z=t),$$ for every $m$-subset $\{i_1,\cdots ,i_m\}\subseteq \{1,\cdots ,n\}$, $a_j\in\mathbb{F}_p(1\leq j\leq m)$, and $t\in \mathbb{F}_p$.
\end{definition}
A function $f$ is said to be balanced if every possible output occurs with equal probability $p^{-1}$. Furthermore, if $f$ is $m$th-order correlation-immune and balanced, then $f$ is said to be {\em $m$-resilient}.

\subsection{Associated Polynomial}\label{subsec:2}

We describe a sequence $\mathbf{f}$ of length $p^n$ corresponding to the function $f$ by listing the values taken by $f(x_1,x_2,\cdots ,x_n)$ as $(x_1,x_2,\cdots ,x_n)$ which ranges over all its $p^n$ values in lexicographic order. In other words, sequence $\mathbf{f}$ is defined by
\begin{equation}\label{f(k)}
  \mathbf{f}=(f(0),f(1),\cdots ,f(p^n-1)),
\end{equation}
where $f(k)=f(x_1,x_2,\cdots ,x_n)$ and $(x_1,x_2,\cdots ,x_n)$ is the $p$-adic representation of the integer $k$  for $0\leq k \leq p^n-1$, i.e., $k=\sum_{i=1}^n x_i p^{i-1}$.

Let $\omega=exp(\frac{2\pi\sqrt{-1}}{p})$ be a $p$th primitive root of unity over the complex field. The polynomial associated with the function $f(x_1,x_2,\cdots ,x_n)$ is defined by the $Z$-transform of the sequence $\omega^{\mathbf{f}}=(\omega^{f(0)},\omega^{f(1)},\cdots ,\omega^{f(p^n-1)})$, i.e.,
\begin{equation}\label{Z-transform}
F(z)=\sum^{p^n-1}_{k=0}\omega^{f(k)}z^k.
\end{equation}

\subsection{Discrete Fourier Transform}\label{subsec:3}

We now introduce the concept of the discrete Fourier transform (DFT) over the complex field of the function $f(x_1,x_2,\cdots ,x_n)$, which actually is the DFT of the sequence described by function $f$. Note that DFT over the complex field introduced here is the traditional DFT, which is different from the DFT over the finite field \cite{Golomb2005Signal}.
\begin{definition}
Let $\xi=exp(\frac{2\pi\sqrt{-1}}{N})(N=p^n)$ be an $N$th primitive root of unity over the complex field. The {\em discrete Fourier transform } (DFT) of function $f(x_1,x_2,\cdots ,x_n)$ over the complex field is defined by
\begin{equation}\label{DFT}
\mathcal{F}_f(j) =\sum^{N-1}_{k=0}{\omega}^{f(k)}\xi^{-kj},0\leq j \leq N-1.
\end{equation}
\end{definition}
Then the inverse discrete Fourier transform (IDFT) of the function $f(x_1,x_2,\cdots ,x_n)$ is given by
\[\omega^{f(k)}=\frac{1}{N}\sum_{j=0}^{N-1}\mathcal{F}_f(j)\xi^{kj},0 \leq k \leq N-1.\]

Let $F_N=(f_{kj})$ be an $N\times N$ matrix whose entries are defined by $f_{kj}=\xi^{-kj}$, $0\le k, j \le N-1$, i.e.,
\[
F_N=
\begin{pmatrix}
1 & 1 & 1 & \cdots & 1\\
1 & \xi^{-1} & \xi^{-2} & \cdots & \xi^{-(N-1)}\\
1 & \xi^{-2} & \xi^{-4} & \cdots & \xi^{-2(N-1)}\\
\vdots &\vdots & \vdots & \vdots & \vdots \\
1 & \xi^{-(N-1)} & \xi^{-2(N-1)} & \cdots & \xi^{-1}\\
\end{pmatrix}.
\]
This is called the {\em DFT matrix}. Under this notation, we have DFT and IDFT respectively:\\
\begin{equation*}
\begin{pmatrix}
	\mathcal{F}_f(0)\\
	\mathcal{F}_f(1)\\
	\vdots\\
	\mathcal{F}_f(N-1)\\
\end{pmatrix}
	 = F_N
\begin{pmatrix}
	\omega^{f(0)}\\
	\omega^{f(1)}\\
	\vdots\\
	\omega^{f(N-1)}\\
\end{pmatrix}
and
\begin{pmatrix}
	\omega^{f(0)}\\
	\omega^{f(1)}\\
	\vdots\\
	\omega^{f(N-1)}\\
\end{pmatrix}
	 =\frac{1}{N}F_N^*
\begin{pmatrix}
	\mathcal{F}_f(0)\\
	\mathcal{F}_f(1)\\
	\vdots\\
	\mathcal{F}_f(N-1)\\
\end{pmatrix}
\end{equation*}	
where $F_N^*$ is the Hermitian transpose of $F_N$, and the entries $f^*_{kj}=\xi^{kj}$, $0\le k, j \le N-1$.

The relationship of DFT spectra of a sequence and its auto-correlation function has been studied before.
\begin{definition}
Let $\mathbf{f}$  be the sequence described by function  $f(\mathbf{x})$: $\mathbb{F}^n_p\rightarrow \mathbb{F}_p$. The {\em auto-correlation} function of $\mathbf{f}$ is defined by 
\[C_{f}(t)=\sum_{k=0}^{N-1}\omega^{f(k+t)-f(k)},0\le t<N,\]
where  $f(k)$ is defined by (\ref{f(k)}) and the addition in $(k+t)$ is over $\mathbb{Z}_N$.
\end{definition}
It is known that the squared magnitude of the DFT spectra of $f(\mathbf{x})$ and the autocorrelation of sequence described by function $f(\mathbf{x})$ are a DFT pair, i.e.,
 \[|\mathcal{F}_f(j)|^2 = \sum_{t=0}^{N-1} C_f(t) \xi^{-tj}, \,\,0\le j<N \]
     and\[C_f(t)= \frac{1}{N}\sum_{j=0}^{N-1} |\mathcal{F}_f(j)|^2 \xi^{jt}, 0\le t<N. \]

Recall the definition of  the DFT and associated polynomial of the function $f(x_1,x_2,\cdots ,x_n)$, it is obvious that $\mathcal{F}_f(j)=F(z=\xi^{-j})$. We shall use polynomial $F(z)$ to explore the DFT of function $f$ in the rest of the paper.

\subsection{Previous Results on the Characterizations of Correlation-Immunity}\label{subsec:4}

Correlation-immune and resilient functions were initially defined by the probabilistic terminology (Definition \ref{definition:d1}), but it is difficult to determine the correlation-immune functions by probabilistic method. Therefore, some algebraic and combinatorial methods were proposed to characterize the correlation-immune functions. For $p=2$, these methods to study Boolean functions are refer to Walsh spectra~\cite{Xiao1988A}, matrices~\cite{Gopalakrishnan1995}, orthogonal arrays~\cite{Camion1991On}, and Fourier spectra~\cite{WangDiscrete}. For general prime $p$, the above first three characterizations of the  correlation-immune functions were generalized to Chrestenson spectra~\cite{Feng1999Three}, matrices~\cite{Gopalakrishnan1995}, and orthogonal arrays~\cite{Gopalakrishnan1995}, respectively.

Chrestenson transform, which is also referred to generalized Walsh-Hadamard transform in the literature, was introduced in \cite{Zhang1995Spectral,Feng1999Three} to study the correlation-immunity of the non-binary functions.
For a vector $\mathbf{c}=(c_1,c_2,\cdots ,c_n)\in \mathbb{F}^n_p$, let $wt(\mathbf{c})$ denotes the Hamming weight of $\mathbf{c}$.

\begin{definition}
{\em Chrestenson linear spectrum} of $f(\mathbf{x})$ is defined by
\[S_f(\mathbf{c})=\frac{1}{p^n}\sum_{x\in\mathbb{F}^n_p}f(\mathbf{x})\omega^{\mathbf{c}\cdot\mathbf{x}},\]
and {\em Chrestenson cyclic spectrum} of $f(\mathbf{x})$ is given by
\[S_{(f)}(\mathbf{c})=\frac{1}{p^n}\sum_{x\in\mathbb{F}^n_p}\omega^{f(\mathbf{x})-\mathbf{c}\cdot\mathbf{x}},\]
respectively, where $\omega=exp(\frac{2\pi\sqrt{-1}}{p})$, $\mathbf{c}\cdot\mathbf{x}=c_1x_1+c_2x_2+\cdots +c_nx_n$ is the inner product of $\mathbf{c}$ and $\mathbf{x}$.
\end{definition}

Correlation-immune functions are characterized by their Chrestenson spectra.
\begin{fact}
(\cite{Feng1999Three}) $f$ is $m$th-order correlation-immune $\Longleftrightarrow$ $S_{f+a}(\mathbf{c})=0$ for  $\forall a\in\mathbb{F}_p$ and $\forall \mathbf{c}\in \mathbb{F}^n_p$ with $1\leq wt(\mathbf{c})\leq m$.
\end{fact}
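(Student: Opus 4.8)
The plan is to route the proof through the \emph{level-set indicators} of $f$ and to show that the constant-shift condition on the Chrestenson spectra is merely a repackaging of the vanishing of their Fourier coefficients. For $t\in\mathbb{F}_p$ write $\mathbf{1}_{f=t}$ for the indicator of $\{\mathbf{x}\in\mathbb{F}^n_p : f(\mathbf{x})=t\}$, and set $\widehat{\mathbf{1}_{f=t}}(\mathbf{c})=\sum_{\mathbf{x}}\mathbf{1}_{f(\mathbf{x})=t}\,\omega^{\mathbf{c}\cdot\mathbf{x}}$. First I would unwind Definition~\ref{definition:d1}: $m$th-order correlation immunity says exactly that for every $m$-subset $\{i_1,\dots,i_m\}$, every $(a_1,\dots,a_m)$ and every $t$, the count $\#\{\mathbf{x}: f(\mathbf{x})=t,\ x_{i_j}=a_j\ (1\le j\le m)\}$ is independent of the $a_j$ (and equals $p^{-m}\#\{f=t\}$). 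For a fixed coordinate set $S$ this count is a function on $\mathbb{F}_p^{S}\cong\mathbb{Z}_p^{m}$, and by Fourier inversion on $\mathbb{Z}_p^n$ it is independent of the prescribed values iff $\widehat{\mathbf{1}_{f=t}}(\mathbf{c})=0$ for every nonzero $\mathbf{c}$ supported on $S$. Letting $S$ range over all $m$-subsets, $m$th-order correlation immunity becomes equivalent to
\[\widehat{\mathbf{1}_{f=t}}(\mathbf{c})=0\quad\text{for all }t\in\mathbb{F}_p\text{ and all }\mathbf{c}\text{ with }1\le wt(\mathbf{c})\le m.\]

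The heart of the argument is then a two-way linear dictionary between the family $\{\mathbf{1}_{f=t}\}_{t}$ and the shifted functions $\{f+a\}_{a}$, read through the linear spectrum $S_{f+a}$ (whose sign convention in $\mathbf{c}\cdot\mathbf{x}$ matches the one above). One direction is immediate: writing $\langle\,\cdot\,\rangle$ for the representative in $\{0,\dots,p-1\}$, we have $f+a=\sum_{t}\langle t+a\rangle\,\mathbf{1}_{f=t}$, so each $S_{f+a}(\mathbf{c})$ is a $\mathbb{C}$-linear combination of the $\widehat{\mathbf{1}_{f=t}}(\mathbf{c})$, and the display forces $S_{f+a}(\mathbf{c})=0$ for all $a$. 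For the converse I would use the wrap-around identity, valid for $1\le a\le p-1$,
\[\langle f(\mathbf{x})+a\rangle = f(\mathbf{x})+a-p\cdot\mathbf{1}_{f(\mathbf{x})\ge p-a},\]
which rearranges to express each tail indicator $\mathbf{1}_{f\ge p-a}$ as a rational-linear combination of the constant function, $\langle f+a\rangle$ and $\langle f\rangle=\langle f+0\rangle$. Telescoping $\mathbf{1}_{f=t}=\mathbf{1}_{f\ge t}-\mathbf{1}_{f\ge t+1}$ (with $\mathbf{1}_{f\ge0}=1$ and $\mathbf{1}_{f\ge p}=0$) then writes every level-set indicator in terms of the shifts $\{f+a\}_{a\in\mathbb{F}_p}$ and a constant. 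Taking Fourier coefficients at any $\mathbf{c}$ with $1\le wt(\mathbf{c})\le m$ annihilates the constant, so $S_{f+a}(\mathbf{c})=0$ for all $a$ yields $\widehat{\mathbf{1}_{f=t}}(\mathbf{c})=0$ for all $t$.

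Chaining the three equivalences completes the proof: correlation immunity $\Longleftrightarrow$ vanishing of all level-set Fourier coefficients on weights $1,\dots,m$ $\Longleftrightarrow$ $S_{f+a}(\mathbf{c})=0$ for all $a\in\mathbb{F}_p$ and all $\mathbf{c}$ with $1\le wt(\mathbf{c})\le m$. I expect the main obstacle to be the converse half of the dictionary: the map $a\mapsto\langle f+a\rangle$ is \emph{not} linear in $a$ because of the reduction modulo $p$, and it is precisely this wrap-around nonlinearity that makes the $p$ shifts carry enough information to recover all $p$ level sets. Making the telescoping bookkeeping exact, and confirming that the only degree of freedom lost is the constant function (invisible at nonzero frequencies), is the step that needs care; the remainder is the standard Fourier-on-$\mathbb{Z}_p^n$ balance criterion.
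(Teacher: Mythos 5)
Your proof is correct, but note that the paper itself offers no proof of this statement: it is quoted as a Fact from Feng's paper, so there is no internal argument to compare against, and what you have produced is a self-contained derivation of an imported result. Your route is the natural one, and the details check out. The reduction of Definition~\ref{definition:d1} to the vanishing of $\widehat{\mathbf{1}_{f=t}}(\mathbf{c})$ for all $t$ and all $1\leq wt(\mathbf{c})\leq m$ is sound: for a fixed $m$-subset $S$ the conditional counts form a function on $\mathbb{F}_p^S$ whose nontrivial Fourier coefficients are exactly the $\widehat{\mathbf{1}_{f=t}}(\mathbf{c})$ with $\mathbf{c}$ supported on $S$, and constancy of the counts automatically pins them to the average $p^{-m}\#\{f=t\}$. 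The genuinely delicate point, which you correctly isolate, is that $S_f$ is the transform of the \emph{integer lift} of $f$, so $a\mapsto\langle f+a\rangle$ is not linear; your wrap-around identity $\langle f+a\rangle=\langle f\rangle+a-p\cdot\mathbf{1}_{f\geq p-a}$ is exactly right, and rearranged it gives $\mathbf{1}_{f\geq p-a}=\frac{1}{p}\bigl(\langle f\rangle+a-\langle f+a\rangle\bigr)$, so that at any nonzero frequency the constant $a$ dies and the hypothesis $S_{f+a}(\mathbf{c})=S_f(\mathbf{c})=0$ (here the inclusion of $a=0$ in the Fact is used, as you implicitly do) kills $\widehat{\mathbf{1}_{f\geq s}}(\mathbf{c})$ for $1\leq s\leq p-1$; the telescoping $\mathbf{1}_{f=t}=\mathbf{1}_{f\geq t}-\mathbf{1}_{f\geq t+1}$ then recovers all level sets, with $\mathbf{1}_{f\geq 0}=1$ invisible at nonzero frequencies, as you say. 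One incidental benefit of your argument: the intermediate condition $\widehat{\mathbf{1}_{f=t}}(\mathbf{c})=0$ is, via $\omega^{bf}=\sum_t\omega^{bt}\mathbf{1}_{f=t}$, also equivalent to the cyclic-spectrum condition of Fact~2, so your level-set dictionary proves Facts~1 and~2 simultaneously, which is presumably close in spirit to Feng's original treatment but independent of it. Only cosmetic quibbles remain: the Fourier inversion you invoke is on $\mathbb{Z}_p^m$ (the coordinates in $S$), not $\mathbb{Z}_p^n$, and you should state once that $f(\mathbf{x})$ in the definition of $S_f$ means the representative in $\{0,\dots,p-1\}$, since the whole converse hinges on that convention.
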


\begin{fact}
(\cite{Zhang1995Spectral},\cite{Feng1999Three}) $f$ is $m$th-order correlation-immune $\Longleftrightarrow$ $S_{(f)}(\mathbf{c})=0$ for $\forall \mathbf{c}\in \mathbb{F}^n_p$ with $1\leq wt(\mathbf{c})\leq m$.
\end{fact}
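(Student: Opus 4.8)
The plan is to convert the probabilistic definition into a statement about level‑set counts, Fourier–expand those counts over $\mathbb{F}_p^{m}$, and identify the resulting frequency coefficients with the cyclic spectrum. For a subset $I=\{i_1,\dots,i_r\}\subseteq\{1,\dots,n\}$ with $r\le m$ and $\mathbf{a}\in\mathbb{F}_p^{r}$, write $N_t(I,\mathbf{a})=|\{\mathbf{x}\in\mathbb{F}_p^n : f(\mathbf{x})=t,\ x_{i_j}=a_j,\ 1\le j\le r\}|$ and $N_t=|\{\mathbf{x}:f(\mathbf{x})=t\}|$. Since the inputs are independent and uniform, Definition~\ref{definition:d1} is equivalent to $N_t(I,\mathbf{a})=N_t/p^{r}$ for all such $I,\mathbf{a},t$; note that independence of $Z$ from an $m$‑tuple automatically yields independence from every sub‑tuple by marginalizing, which is exactly why the weight range $1\le wt(\mathbf{c})\le m$ is the right one. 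I would then record the auxiliary quantity $T_t(\mathbf{c})=\sum_{\mathbf{x}:f(\mathbf{x})=t}\omega^{-\mathbf{c}\cdot\mathbf{x}}$, so that $p^{n}S_{(f)}(\mathbf{c})=\sum_{t\in\mathbb{F}_p}\omega^{t}T_t(\mathbf{c})$.

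For the forward implication, fix $\mathbf{c}$ with support $I$ of size $r=wt(\mathbf{c})\le m$. Because $c_i=0$ off $I$, grouping $\mathbf{x}$ by the value $\mathbf{a}$ of its $I$‑coordinates gives $T_t(\mathbf{c})=\sum_{\mathbf{a}\in\mathbb{F}_p^{r}}N_t(I,\mathbf{a})\,\omega^{-\sum_j c_{i_j}a_j}$. Substituting the constant value $N_t(I,\mathbf{a})=N_t/p^{r}$ and using that each coordinate sum $\sum_{a\in\mathbb{F}_p}\omega^{-c_i a}$ vanishes (as $c_i\ne0$ for $i\in I$) yields $T_t(\mathbf{c})=0$ for every $t$, and hence $S_{(f)}(\mathbf{c})=0$.

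The converse is where a genuine idea is needed, since a single cyclic spectrum $S_{(f)}(\mathbf{c})=\tfrac1{p^n}\sum_t\omega^{t}T_t(\mathbf{c})$ is only one linear relation among the $p$ numbers $T_0(\mathbf{c}),\dots,T_{p-1}(\mathbf{c})$, which for $p>2$ is far from forcing them all to vanish. The key step I would exploit is the Galois action on $\mathbb{Q}(\omega)$: the automorphism $\sigma_r:\omega\mapsto\omega^{r}$ with $r\in\mathbb{F}_p^{*}$ sends the hypothesis $S_{(f)}(r^{-1}\mathbf{c})=0$ to $\sigma_r(0)=0$, that is $\sum_{\mathbf{x}}\omega^{rf(\mathbf{x})-\mathbf{c}\cdot\mathbf{x}}=0$, equivalently $\sum_t\omega^{rt}T_t(\mathbf{c})=0$. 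Since $r^{-1}\mathbf{c}$ has the same weight as $\mathbf{c}$, this holds for every $r\in\mathbb{F}_p^{*}$ and every $\mathbf{c}$ with $1\le wt(\mathbf{c})\le m$. Adjoining the trivial relation $\sum_tT_t(\mathbf{c})=\sum_{\mathbf{x}}\omega^{-\mathbf{c}\cdot\mathbf{x}}=0$ (the case $r=0$, valid because $\mathbf{c}\ne\mathbf{0}$), I obtain $\sum_t\omega^{rt}T_t(\mathbf{c})=0$ for all $r\in\mathbb{F}_p$. The matrix $(\omega^{rt})_{r,t\in\mathbb{F}_p}$ is an invertible Vandermonde (DFT) matrix, so Fourier inversion forces $T_t(\mathbf{c})=0$ for every $t$ and every $\mathbf{c}$ with $1\le wt(\mathbf{c})\le m$.

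To finish, fix $t$ and a set $I$ of size $r\le m$, and read the numbers $T_t(\mathbf{c})$, as $\mathbf{c}$ ranges over the nonzero vectors supported in $I$, as precisely the nonzero‑frequency Fourier coefficients of the function $\mathbf{a}\mapsto N_t(I,\mathbf{a})$ on $\mathbb{F}_p^{r}$. Their vanishing makes this function constant, equal to its average $N_t/p^{r}$, which is the counting reformulation of $m$th‑order correlation immunity set up at the outset. The main obstacle is exactly the Galois step in the converse: recognizing that the conjugates of $S_{(f)}$ supply the missing $p-1$ relations, so that the single cyclic spectrum carries the same information as the whole family $\{S_{(rf)}\}_{r\in\mathbb{F}_p^{*}}$; once that is in place, everything else is routine bookkeeping with character sums.
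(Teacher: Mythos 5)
Your proof is correct, but note that the paper itself offers no proof of this statement: it is quoted as a known \emph{Fact}, cited to Zhang--Xiao and Feng, and used as background. So the comparison can only be made against the paper's own machinery for its main theorem (Lemma~\ref{lemma}), which faces exactly the same obstacle you identify --- for $p>2$ a single vanishing spectral value is just one linear relation among $p$ unknown quantities. The paper resolves it differently: in Lemma~\ref{lemma} the coefficients multiplying the powers $\omega^t$ are \emph{integer} count differences, so the fact that the minimal polynomial of $\omega$ over $\mathbb{Q}$ is $1+z+\cdots+z^{p-1}$ forces those integers to be all equal to a constant $c$, and a normalization (both families of counts sum to $p^{n-m}$) then forces $c=0$. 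In your converse the analogous coefficients $T_t(\mathbf{c})$ are not integers but elements of $\mathbb{Q}(\omega)$, so that argument is unavailable; you instead supply the missing $p-1$ relations by Galois conjugation, using that the hypothesis set $\{\mathbf{c}: 1\le wt(\mathbf{c})\le m\}$ is closed under scalar multiplication, so $\sigma_r$ applied to $S_{(f)}(r^{-1}\mathbf{c})=0$ yields $\sum_t \omega^{rt}T_t(\mathbf{c})=0$, and DFT inversion kills every $T_t(\mathbf{c})$. Both mechanisms exploit the arithmetic of $\mathbb{Q}(\omega)$, but yours has the added benefit of making explicit why the cyclic spectrum alone (Fact~2) carries the same information as the full family of spectra $S_{f+a}$ (Fact~1): the conjugates of $S_{(f)}$ on the weight ball are precisely the spectra of the scaled functions $rf$. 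Your steps all check out, including the two points where care is needed: $\sigma_r$ acts on the whole character sum (which is why you must evaluate the hypothesis at $r^{-1}\mathbf{c}$ rather than at $\mathbf{c}$), and the $r=0$ relation $\sum_t T_t(\mathbf{c})=0$ requires $\mathbf{c}\neq\mathbf{0}$, which your weight assumption guarantees.
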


For $\forall \mathbf{c}\in \mathbb{F}^n_p$, define a $p\times p$  matrix $N=N(\mathbf{c})=(\eta_{ij})$, where
\[\eta_{ij}=p^n \cdot P_r(\mathbf{x}\cdot\mathbf{c}=i~and~f(\mathbf{x})=j),i,j\in \mathbb{F}_p.\]
Correlation-immune functions are characterized by the following matrix method.
\begin{fact}
(\cite{Gopalakrishnan1995}) $f$ is $m$th-order correlation-immune $\Longleftrightarrow$ the rows of the matrix $N(\mathbf{c})$ are all identical for $\forall \mathbf{c}\in \mathbb{F}^n_p$ with $1\leq wt(\mathbf{c})\leq m$.
\end{fact}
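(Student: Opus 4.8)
The plan is to translate both sides of the equivalence into statements about statistical independence and then bridge them with finite Fourier analysis over $\mathbb{F}_p$. Write $\mathbf{X}=(X_1,\dots,X_n)$, $Z=f(\mathbf{X})$, and for a fixed $\mathbf{c}$ set $W=\mathbf{c}\cdot\mathbf{X}$. First I would record two reformulations. On the one hand, because the $X_i$ are independent and uniform, each event $\{X_{i_j}=a_j,\ 1\le j\le m\}$ has probability $p^{-m}$, so Definition~\ref{definition:d1} is equivalent to the joint independence of $Z$ and the subvector $\mathbf{X}_S=(X_{i})_{i\in S}$ for every $m$-subset $S=\{i_1,\dots,i_m\}$. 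On the other hand, the entry $\eta_{ij}=p^{n}P_r(W=i,\,Z=j)$ is just the count of $\mathbf{x}$ with $\mathbf{c}\cdot\mathbf{x}=i$ and $f(\mathbf{x})=j$; since $W$ is uniform on $\mathbb{F}_p$ whenever $\mathbf{c}\ne\mathbf{0}$ and the column sums of $N(\mathbf{c})$ equal $p^{n}P_r(Z=j)$, a short computation shows that the rows of $N(\mathbf{c})$ are identical if and only if $Z$ and $W=\mathbf{c}\cdot\mathbf{X}$ are independent. Thus the Fact reduces to: $Z$ is independent of $\mathbf{X}_S$ for every $m$-subset $S$ if and only if $Z$ is independent of $\mathbf{c}\cdot\mathbf{X}$ for every $\mathbf{c}$ with $1\le wt(\mathbf{c})\le m$.

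For the forward direction I would argue directly. Given an $m$th-order correlation-immune $f$ and any $\mathbf{c}$ with $1\le wt(\mathbf{c})\le m$, choose an $m$-subset $S\supseteq\operatorname{supp}(\mathbf{c})$ (possible since $wt(\mathbf{c})\le m\le n$). Then $\mathbf{c}\cdot\mathbf{X}$ is a deterministic function of $\mathbf{X}_S$, and a random variable is independent of any function of a vector from which it is independent; hence $Z$ and $\mathbf{c}\cdot\mathbf{X}$ are independent, i.e.\ the rows of $N(\mathbf{c})$ coincide.

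The converse is the substantive step, and here I would invoke characters. Independence of the $\mathbb{F}_p$-valued $Z$ and the $\mathbb{F}_p^{m}$-valued $\mathbf{X}_S$ is equivalent to the factorization $\mathbb{E}[\omega^{aZ+\mathbf{b}\cdot\mathbf{X}_S}]=\mathbb{E}[\omega^{aZ}]\,\mathbb{E}[\omega^{\mathbf{b}\cdot\mathbf{X}_S}]$ for all $a\in\mathbb{F}_p$ and all $\mathbf{b}\in\mathbb{F}_p^{m}$, because the maps $z\mapsto\omega^{az}$ and $\mathbf{x}_S\mapsto\omega^{\mathbf{b}\cdot\mathbf{x}_S}$ exhaust the additive characters of $\mathbb{F}_p$ and $\mathbb{F}_p^{m}$, and a function on a finite abelian group vanishes precisely when all its Fourier coefficients do. For $\mathbf{b}=\mathbf{0}$ the identity is trivial; for $\mathbf{b}\ne\mathbf{0}$, extend $\mathbf{b}$ by zeros outside $S$ to obtain $\mathbf{c}\in\mathbb{F}_p^{n}$ with $1\le wt(\mathbf{c})=wt(\mathbf{b})\le m$ and $\mathbf{c}\cdot\mathbf{X}=\mathbf{b}\cdot\mathbf{X}_S$. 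The hypothesis that $Z$ and $\mathbf{c}\cdot\mathbf{X}$ are independent then yields exactly the required factorization (taking the characters $\omega^{a(\cdot)}$ and $\omega^{(\cdot)}$ of $Z$ and $W$). Since $S$ was an arbitrary $m$-subset, $Z$ is independent of every $\mathbf{X}_S$, which is correlation-immunity.

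The main obstacle is precisely this converse: passing from independence against every single linear form $\mathbf{c}\cdot\mathbf{X}$ to independence against the entire coordinate subvector $\mathbf{X}_S$. The point that makes it work---and that must be stated carefully---is that over the field $\mathbb{F}_p$ every nontrivial additive character of the subspace indexed by $S$ has the form $\mathbf{x}_S\mapsto\omega^{\mathbf{b}\cdot\mathbf{x}_S}$ with $\mathbf{b}\neq\mathbf{0}$, so the linear forms of weight between $1$ and $m$ already generate all the characters needed to detect joint independence. As an alternative one could route the converse through the Chrestenson cyclic spectra $S_{(bf)}$ of the functions $bf$ ($b\in\mathbb{F}_p^{*}$) using the cited Fact, but the character argument above is self-contained.
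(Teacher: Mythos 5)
Your proof is correct, but there is nothing in the paper to measure it against: the statement is one of the paper's ``Facts,'' quoted without proof from Gopalakrishnan and Stinson, so the only available comparison is with that cited source and with the paper's neighbouring spectral characterizations. Both of your reformulations hold: correlation immunity is exactly joint independence of $Z=f(\mathbf{X})$ from every coordinate subvector $\mathbf{X}_S$ with $|S|=m$, and for $\mathbf{c}\neq\mathbf{0}$ the rows of $N(\mathbf{c})$ coincide precisely when $Z$ and $W=\mathbf{c}\cdot\mathbf{X}$ are independent (in fact identical rows already force $W$ to be uniform, by summing a row over $j$, so uniformity need not be assumed separately in that direction). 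The forward implication---$Z$ independent of $\mathbf{X}_S$ implies $Z$ independent of any function of $\mathbf{X}_S$, in particular of $\mathbf{c}\cdot\mathbf{X}$ for $S\supseteq\mathrm{supp}(\mathbf{c})$---is sound, and the substantive converse is handled correctly: independence of random variables valued in finite abelian groups is equivalent to factorization of all joint character sums, the case $\mathbf{b}=\mathbf{0}$ is trivial, and every nonzero character of $\mathbb{F}_p^{m}$ comes from a linear form of weight between $1$ and $m$, whose assumed full independence from $Z$ yields the factorization for all $a\in\mathbb{F}_p$ at once. It is worth noting that your converse is, in disguise, a spectral argument of the same flavour as the paper's Facts 1 and 2: since $\mathbb{E}\left[\omega^{\mathbf{c}\cdot\mathbf{X}}\right]=0$ for $\mathbf{c}\neq\mathbf{0}$, your factorization condition is equivalent to the vanishing of the Chrestenson cyclic spectra of all nonzero multiples $af$ at the points of weight between $1$ and $m$; what your route buys is self-containedness (you derive this directly from the matrix hypothesis rather than citing the spectral characterization), whereas the cited source obtains the matrix criterion as part of a package together with the orthogonal-array and spectral ones. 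The only steps you lean on without proof---Fourier inversion on finite abelian groups and the characteristic-function criterion for independence---are standard and fair to quote.
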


An $M\times n$ matrix $A$ with entries from a set of $p$ elements is an {\em orthogonal array} if any set of $m$ columns of $A$ contains all $p^m$ possible row vectors exactly $M/{p^m}$ times. Such an array is denoted by $(M, n, p, m)$. Define $W_i=\{\mathbf{x}\in\mathbb{F}_p^n:f(\mathbf{x})=i\}$ and $b_i=|W_i|$. Construct an array $B_i$ whose rows are elements of $W_i$. Then $B_i$ is a $b_i\times n$ array for $0\leq i\leq p-1$.
\begin{fact}
(\cite{Gopalakrishnan1995}) $f$ is $m$th-order correlation-immune $\Longleftrightarrow$ $B_i$ is a $(b_i,n,p,m)$ orthogonal array for every $i$ with $0\leq i\leq p-1$.
\end{fact}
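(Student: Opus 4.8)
The plan is to reduce both conditions to a single counting identity and then observe that each side demands exactly that identity. Fix an $m$-subset $\{i_1,\dots,i_m\}\subseteq\{1,\dots,n\}$, a tuple $(a_1,\dots,a_m)\in\mathbb{F}_p^m$, an output $t\in\mathbb{F}_p$, and set
\[
\Delta=\bigl|\{\mathbf{x}\in\mathbb{F}_p^n: f(\mathbf{x})=t,\ x_{i_j}=a_j\ (1\le j\le m)\}\bigr|.
\]
Because the $X_i$ are independent and uniform, $P_r(Z=t)=b_t/p^n$ and $P_r(X_{i_j}=a_j,\,1\le j\le m)=p^{-m}$, so
\[
P_r\bigl(Z=t\mid X_{i_j}=a_j,\,1\le j\le m\bigr)=\frac{\Delta}{p^{\,n-m}}.
\]
Setting this equal to $P_r(Z=t)=b_t/p^n$ shows that $f$ is $m$th-order correlation-immune if and only if $\Delta=b_t/p^{m}$ for every admissible choice of $m$-subset, tuple, and output $t$.

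Next I would unwind the orthogonal-array side. The rows of $B_t$ are exactly the elements of $W_t=\{\mathbf{x}:f(\mathbf{x})=t\}$, so the number of rows of $B_t$ exhibiting the tuple $(a_1,\dots,a_m)$ in the columns $i_1,\dots,i_m$ is precisely $\Delta$. Hence $B_t$ is a $(b_t,n,p,m)$ orthogonal array --- i.e.\ every $m$-subset of its columns contains each of the $p^m$ tuples exactly $b_t/p^m$ times --- if and only if $\Delta=b_t/p^m$ holds for all $m$-subsets and all tuples. Demanding this for every $t$ with $0\le t\le p-1$ reproduces exactly the condition of the preceding paragraph, and the two-way implication follows immediately.

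The argument is essentially bookkeeping, so the care lies in matching the quantifiers and the normalizations rather than in any single hard step. The substantive observation on each side is the same: correlation immunity forces the count $\Delta$ to be independent of the prescribed tuple $(a_1,\dots,a_m)$, taking the common value $b_t/p^m$, and this tuple-uniformity is exactly the defining property of an orthogonal array (with $M=b_t$, so that $M/p^m=b_t/p^m$). I expect the only minor obstacle to be confirming that the rows of $B_t$ biject with the elements of $W_t$ without repetition, so that row-frequencies and set-cardinalities coincide and the integer $b_t/p^m$ is well defined.
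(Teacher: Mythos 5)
The paper does not prove this statement at all: it is quoted as a known ``Fact'' in the preliminaries, with the proof deferred to the cited reference [Gopalakrishnan--Stinson 1995], so there is no in-paper argument to compare against. Your proof is correct and is the standard one: both the probabilistic condition and the orthogonal-array condition reduce to the single counting identity $\Delta=b_t/p^m$ for every $m$-subset of coordinates, every prescribed tuple, and every output value $t$, and the bijection between the rows of $B_t$ and the elements of $W_t$ holds by the paper's construction (the rows of $B_t$ are exactly the $b_t=|W_t|$ distinct elements of $W_t$), so the row-frequency and the cardinality $\Delta$ coincide as you claim.
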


Fourier spectrum characterization for the correlation-immune Boolean functions was proposed by Wang and Gong in \cite{WangDiscrete}. We shall generalize these results to functions from $\mathbb{F}^n_p$ to $\mathbb{F}_p$ for prime $p$.

\section{Fourier Spectrum Characterization for Correlation-Immune Functions}\label{section:s4}

Let $\pi$ be a permutation of symbols $\{1,2,\cdots ,n\}$, $f_{\pi}=f(x_{\pi(1)},x_{\pi(2)},\cdots ,x_{\pi(n)})$ a function obtained by permuting the variables in $f(x_1,x_2,\cdots ,x_n)$, and $F_{\pi}(z)$ the polynomial associated with the function $f_{\pi}$.

For any integer $d$, let $\Phi_d(z)$ denote the $d$th {\em cyclotomic polynomial} \cite{mceliece1987finite}. Then $\Phi_d(z)$, a monic polynomial with integer coefficients, is the minimal polynomial over the rational field of any primitive $d$th-root of unity. The main result of the paper is given as follows.

\begin{theorem}\label{theorem:t1}
Let $f(x_1,x_2,\cdots ,x_n)$ be a function from $\mathbb{F}^n_p$ to $\mathbb{F}_p$ for $p$ prime. Then $f$ is $m$th-order correlation-immune if and only if $$\Phi_{p^m}(z)|F_{\pi}(z),$$ for all permutation $\pi$.
\end{theorem}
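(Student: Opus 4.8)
The plan is to connect the cyclotomic divisibility condition with the Chrestenson cyclic spectrum characterization (Fact from \cite{Zhang1995Spectral,Feng1999Three}), which states that $f$ is $m$th-order correlation-immune if and only if $S_{(f)}(\mathbf{c}) = 0$ for all $\mathbf{c} \in \mathbb{F}_p^n$ with $1 \le wt(\mathbf{c}) \le m$. Since $\mathcal{F}_f(j) = F(z = \xi^{-j})$ where $\xi = \exp(2\pi\sqrt{-1}/N)$ with $N = p^n$, the key observation is that the roots of $\Phi_{p^m}(z)$ are exactly the primitive $p^m$th roots of unity, and these are of the form $\xi^{-j}$ precisely when $j$ is a multiple of $p^{n-m}$ that is \emph{not} a multiple of $p^{n-m+1}$ (using $N = p^n$). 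So the condition $\Phi_{p^m}(z) \mid F_\pi(z)$ is equivalent to saying $\mathcal{F}_{f_\pi}(j) = 0$ for all such $j$. The first step, then, is to identify precisely which DFT locations $j$ correspond to the primitive $p^m$th roots of unity and thereby translate the divisibility statement into a vanishing statement for the Fourier spectrum at those locations.

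Next I would unwind what $\mathcal{F}_{f_\pi}(j) = F_\pi(\xi^{-j})$ means combinatorially for these special $j$. Writing $j = p^{n-m} \cdot s$ with $\gcd(s, p) = 1$, the exponential $\xi^{-kj} = \exp(-2\pi\sqrt{-1}\, k s / p^m)$ depends only on $k \bmod p^m$. Using the $p$-adic expansion $k = \sum_{i=1}^n x_i p^{i-1}$, the residue $k \bmod p^m$ picks out the low-order coordinates $x_1, \dots, x_m$. Thus $F_\pi(\xi^{-j})$ factors into a sum over $(x_1, \dots, x_m)$ of $\omega^{f_\pi(\cdot)}$-weighted characters, and after applying the permutation $\pi$ this becomes exactly a Chrestenson-type cyclic spectrum value $S_{(f)}(\mathbf{c})$ for a vector $\mathbf{c}$ supported on the coordinate positions selected by $\pi$, with weight equal to the number of nonzero entries among the chosen $s$-digits. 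The plan is to make this correspondence explicit: ranging over all permutations $\pi$ and all admissible $s$ (equivalently, all primitive $p^m$th roots via the different values $j$) sweeps out exactly all $\mathbf{c}$ with $1 \le wt(\mathbf{c}) \le m$.

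The main obstacle I anticipate is the bookkeeping that ties together three things simultaneously: the restriction to primitive $p^m$th roots (which fixes the $p$-adic valuation of $j$ and hence the relevant digit pattern of $s$ in base $p$), the role of the permutation $\pi$ in moving the support of $\mathbf{c}$ to arbitrary coordinate positions, and ensuring that as $\pi$ and $j$ vary we recover \emph{all} weight-$w$ vectors for every $1 \le w \le m$ and not merely those of weight exactly $m$. The subtlety is that a single primitive $p^m$th root of unity, when expanded in base $p$, carries a vector of $m$ digits $s = (s_1, \dots, s_m)$ whose weight $w = wt(s)$ can range between $1$ and $m$; the condition $\gcd(s, p^m) = p^{n-m}$ guarantees $s$ is not divisible by $p$, forcing at least the lowest digit to be nonzero, so $w \ge 1$. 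I would therefore argue carefully that the collection $\{\,\mathbf{c} : 1 \le wt(\mathbf{c}) \le m\,\}$ is recovered in full from the joint data $(\pi, j)$, establishing the two-way implication; the forward and reverse directions then follow symmetrically from the established dictionary between $\Phi_{p^m}(z) \mid F_\pi(z)$ and the vanishing of all low-weight Chrestenson cyclic spectra.
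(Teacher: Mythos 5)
Your reduction to the Chrestenson cyclic spectrum characterization breaks down at its central step: the claimed dictionary between DFT values at positions $j$ with $\gcd(j,p^n)=p^{n-m}$ and Chrestenson values $S_{(f)}(\mathbf{c})$ does not exist once $m\ge 2$. Writing $j=p^{n-m}s$ with $\gcd(s,p)=1$, the summand of $\mathcal{F}_{f_\pi}(j)$ involves the character $k\mapsto \exp(-2\pi\sqrt{-1}\,sk/p^m)$, which is a character of the \emph{cyclic} group $\mathbb{Z}_{p^m}$ and takes values among the $p^m$th roots of unity; a Chrestenson character $\mathbf{x}\mapsto\omega^{-\mathbf{c}\cdot\mathbf{x}}$ takes values only among the $p$th roots of unity. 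Because the product $sk \bmod p^m$ produces carries across the base-$p$ digits $x_1,\dots,x_m$ of $k$, it is not an $\mathbb{F}_p$-linear function of $(x_1,\dots,x_m)$, so $\exp(-2\pi\sqrt{-1}\,sk/p^m)$ is not equal to $\omega^{-\mathbf{c}\cdot\mathbf{x}}$ for any choice of $\mathbf{c}$. Already for $p=2$, $m=2$, $s=1$ the character is $k\mapsto (\sqrt{-1})^{-k}$, which takes the value $-\sqrt{-1}$ at $k=1$ and hence cannot equal $(-1)^{c_1x_1+c_2x_2}$; your dictionary is valid only for $m=1$. Consequently, ``$\mathcal{F}_{f_\pi}(j)=0$ for all such $j$ and all $\pi$'' is not a re-indexing of ``$S_{(f)}(\mathbf{c})=0$ for $1\le wt(\mathbf{c})\le m$,'' and the Fact you invoke cannot close either direction of the equivalence. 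This is precisely the point the paper makes in its introduction: the Fourier spectrum lives on $\mathbb{Z}_{p^n}$ rather than $\mathbb{F}_p^n$, so the theorem is genuinely new content and not a special case of the Walsh/Chrestenson-type characterizations.

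For comparison, the paper avoids any such dictionary and argues directly with the polynomial $F(z)$. Its Lemma reduces $F(z)$ modulo $\Phi_{p^m}(z)$: the relation $\Phi_{p^m}(z)\mid z^{p^i}-1$ for $i\ge m$ eliminates the variables $x_{m+1},\dots,x_n$ from the exponents, the congruence $(z^{p^{m-1}})^{p-1}\equiv -\sum_{j=0}^{p-2}(z^{p^{m-1}})^j \pmod{\Phi_{p^m}(z)}$ eliminates the case $x_m=p-1$, and since all surviving exponents are smaller than $\deg\Phi_{p^m}=\varphi(p^m)$, each coefficient must vanish separately. The resulting vanishing character sums are converted into equality of conditional distributions by a counting argument resting on the minimal polynomial $1+z+\cdots+z^{p-1}$ of $\omega$; permutations then bootstrap this single-variable Markov-type condition into full $m$th-order correlation immunity. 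If you wish to keep your roots-of-unity viewpoint, the statement that survives is this: vanishing of $F_\pi$ at all primitive $p^m$th roots of unity forces the partial sums $\sum_{x_{m+1},\dots,x_n}\omega^{f_\pi(\mathbf{x})}$ to be independent of the top digit $x_m$ given $x_1,\dots,x_{m-1}$ --- but passing from that to a statement about probability distributions still requires the paper's counting step, not the Chrestenson Fact.
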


Since $\Phi_{p^m}(z)$ is  the minimal polynomial of $\xi^{-p^{n-m}}$ with respect to polynomial ring with rational coefficients, where $\xi=exp(\frac{2\pi\sqrt{-1}}{p^n})$ in Definition 2,  $\Phi_{p^m}(z)|F_{\pi}(z)$ if and only if $F_{\pi}(\xi^{-p^{n-m}})=0$. Recall the definition of DFT of the function, we have $\mathcal{F}_{f_{\pi}}(p^{n-m})=F_{\pi}(z=\xi^{-p^{n-m}})$. Fourier spectrum characterization of the correlation-immune functions is obtained immediately.

\begin{corollary}
Let $f(x_1,x_2,\cdots ,x_n)$ be a function from $\mathbb{F}^n_p$ to $\mathbb{F}_p$  for $p$ prime. Then $f$ is $m$th-order  correlation-immune if and only if $$\mathcal{F}_{f_{\pi}}(p^{n-m})=0,$$ for all permutation $\pi$.
\end{corollary}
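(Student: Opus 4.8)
The plan is to deduce the corollary from Theorem~\ref{theorem:t1} by showing that, for each fixed permutation $\pi$, the divisibility $\Phi_{p^m}(z)\mid F_\pi(z)$ is equivalent to the single vanishing $\mathcal{F}_{f_\pi}(p^{n-m})=0$; quantifying this equivalence over all $\pi$ and invoking Theorem~\ref{theorem:t1} then yields the statement. Write $\beta=\xi^{-p^{n-m}}$ with $\xi=\exp(2\pi\sqrt{-1}/p^n)$. Since $\beta$ has multiplicative order $p^n/\gcd(p^n,p^{n-m})=p^m$, it is a primitive $p^m$-th root of unity, and $\mathcal{F}_{f_\pi}(p^{n-m})=F_\pi(\beta)$ by the identity $\mathcal{F}_{f_\pi}(j)=F_\pi(\xi^{-j})$ recalled in Section~\ref{subsec:3}.

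One direction is immediate. If $\Phi_{p^m}(z)\mid F_\pi(z)$ then, because $\beta$ is a root of $\Phi_{p^m}$, we get $F_\pi(\beta)=0$, i.e. $\mathcal{F}_{f_\pi}(p^{n-m})=0$. Hence the ``only if'' half of the corollary (correlation-immunity forces the vanishing) follows at once from Theorem~\ref{theorem:t1}, with no further work.

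The hard part is the converse: I must upgrade the single evaluation $F_\pi(\beta)=0$ to the full divisibility $\Phi_{p^m}(z)\mid F_\pi(z)$. Over $\mathbb{Q}$ this is automatic, since $\Phi_{p^m}$ is the minimal polynomial of $\beta$, and this is exactly what the paragraph preceding the corollary invokes. The subtlety I expect to be the main obstacle is that $F_\pi(z)=\sum_k\omega^{f_\pi(k)}z^k$ does \emph{not} have rational coefficients: they lie in $\mathbb{Z}[\omega]$, the ring of $p$-th cyclotomic integers. Over the field $\mathbb{Q}(\omega)=\mathbb{Q}(\zeta_p)$ the polynomial $\Phi_{p^m}$ is reducible, because $[\mathbb{Q}(\zeta_{p^m}):\mathbb{Q}(\zeta_p)]=p^{m-1}$ while $\deg\Phi_{p^m}=(p-1)p^{m-1}$, so $\Phi_{p^m}$ splits into $p-1$ irreducible factors of degree $p^{m-1}$ over $\mathbb{Q}(\zeta_p)$. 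Consequently $F_\pi(\beta)=0$ certifies only that the single factor $g_\beta$ — the minimal polynomial of $\beta$ over $\mathbb{Q}(\zeta_p)$ — divides $F_\pi$, which is strictly weaker than $\Phi_{p^m}\mid F_\pi$ as soon as $p>2$.

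The natural attempt to close this gap is a Galois argument: apply $\sigma_t\in\mathrm{Gal}(\mathbb{Q}(\zeta_{p^n})/\mathbb{Q})$ with $\sigma_t(\xi)=\xi^t$, $\gcd(t,p)=1$, to the relation $F_\pi(\beta)=0$. A direct computation gives $\sigma_t\bigl(F_\pi(\beta)\bigr)=\sum_k\omega^{(t\bmod p)\,f_\pi(k)}(\beta^t)^k$, which is the polynomial associated to the \emph{twisted} function $(t\bmod p)\cdot f_\pi$ evaluated at $\beta^t$, rather than $F_\pi(\beta^t)$. The orbit therefore closes only for $t\equiv1\pmod p$, recovering exactly the $p^{m-1}$ conjugates of $\beta$ over $\mathbb{Q}(\zeta_p)$ and nothing more; the remaining factors of $\Phi_{p^m}$ stay out of reach. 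I expect this obstruction to be fatal for $p>2$, so that vanishing of the one coefficient $\mathcal{F}_{f_\pi}(p^{n-m})$ over all $\pi$ is genuinely weaker than $m$th-order correlation-immunity, and I would first try to confirm this with a small explicit example (for instance $p=3$, $n=2$, $m=1$ and $f(x_1,x_2)=g(x_1)$ for a suitable nonconstant $g$, where the relation $1+\omega+\omega^2=0$ forces $F_\pi(\beta)=0$ for both permutations without forcing $F_\pi(\omega)=0$, hence without $f$ being correlation-immune). To rescue a correct statement one should instead require $\mathcal{F}_{f_\pi}(c\,p^{n-m})=0$ for every $c$ coprime to $p$ (equivalently, vanishing at all primitive $p^m$-th roots of unity), a family that collapses to the single index $p^{n-m}$ precisely when $p=2$, which is what makes the Boolean case of \cite{WangDiscrete} a genuine corollary.
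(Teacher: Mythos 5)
Your proposal does not prove the corollary, and rightly so: your diagnosis is correct, and the corollary as printed is false for $p>2$. The paper's entire ``proof'' is the bridge paragraph between Theorem~\ref{theorem:t1} and the corollary, which asserts that $\Phi_{p^m}(z)\mid F_{\pi}(z)$ if and only if $F_{\pi}(\xi^{-p^{n-m}})=0$ because $\Phi_{p^m}$ is the minimal polynomial of $\xi^{-p^{n-m}}$ over the rationals. This is exactly the step you attack: $F_{\pi}(z)=\sum_k\omega^{f_\pi(k)}z^k$ has coefficients in $\mathbb{Z}[\omega]$, not $\mathbb{Q}$, and over $\mathbb{Q}(\omega)$ the polynomial $\Phi_{p^m}$ splits into $p-1$ irreducible factors of degree $p^{m-1}$, so the single vanishing $F_\pi(\xi^{-p^{n-m}})=0$ certifies only one of those factors. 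Your Galois computation is also right: $\sigma_t$ carries $F_\pi(\beta)$ to the polynomial of the twisted function $tf_\pi$ evaluated at $\beta^t$, so conjugation closes up only for $t\equiv 1\pmod p$ and cannot recover the missing factors. Only for $p=2$, where the coefficients $\pm1$ are rational, is the paper's minimal-polynomial argument valid --- which is why the Boolean result of \cite{WangDiscrete} is sound while its generalization here breaks.

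Your sketched counterexample does work once you pick the twist correctly: take $p=3$, $n=2$, $m=1$, and $f(x_1,x_2)=2x_1$ (the choice $g(x_1)=x_1$ fails for the identity permutation, but $g(x_1)=2x_1$ succeeds). With $\beta=\xi^{-3}=\omega^{-1}$, for the identity permutation
\begin{equation*}
\mathcal{F}_f(3)=\Bigl(\sum_{x_1=0}^{2}\omega^{2x_1}\beta^{x_1}\Bigr)\Bigl(\sum_{x_2=0}^{2}\beta^{3x_2}\Bigr)=3\sum_{x_1=0}^{2}\omega^{x_1}=0,
\end{equation*}
and for the transposition $\pi=(1\,2)$,
\begin{equation*}
\mathcal{F}_{f_{\pi}}(3)=\Bigl(\sum_{x_1=0}^{2}\beta^{x_1}\Bigr)\Bigl(\sum_{x_2=0}^{2}\omega^{2x_2}\Bigr)=0\cdot 0=0.
\end{equation*}
These are all $n!=2$ permutations, yet $f=2x_1$ is plainly not first-order correlation-immune. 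Note that Theorem~\ref{theorem:t1} itself survives: here $F(\omega)=9\neq 0$, so $\Phi_3(z)\nmid F(z)$, consistent with the divisibility form; the error is confined to the passage from divisibility to a single spectral value. Your proposed repair is also the right one: demanding $\mathcal{F}_{f_\pi}(c\,p^{n-m})=0$ for all $c$ with $\gcd(c,p)=1$ means $F_\pi$ vanishes at every primitive $p^m$-th root of unity, and since $\Phi_{p^m}$ is monic and squarefree this yields $\Phi_{p^m}\mid F_\pi$ in $\mathbb{Z}[\omega][z]$, restoring the equivalence with Theorem~\ref{theorem:t1}; this family of conditions collapses to the single index $p^{n-m}$ precisely when $p=2$. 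The same defect propagates to Corollary~\ref{c1} and to the resiliency statement built on $\mathcal{F}_f(0)=0$ together with $\mathcal{F}_{f_\pi}(p^{n-m})=0$, which would need the analogous strengthening.
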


Before giving a proof for Theorem \ref{theorem:t1}, we study the cyclotomic polynomial $\Phi_{p^m}(z)$ first.
It is obvious from the definition of cyclotomic polynomial that
\[\Phi_{p^m}(z)=\prod \{(z-\xi^{j}):0\leq j\leq p^n-1,\mbox{gcd}(j,p^n)=p^{n-m}\},\]
where gcd denotes the great common divisor. On the other hand, since
$$z^N-1=\prod_{d|N}\Phi_d(z),$$
we have
$$z^{p^m}-1=\prod_{j=0}^m\Phi_{p^j}(z)=\Phi_{p^m}(z)\Phi_{p^{m-1}}(z)\cdots\Phi_p(z)\Phi_1(z),$$
and $$z^{p^{m-1}}-1=\prod_{j=0}^{m-1}\Phi_{p^j}(z)=\Phi_{p^{m-1}}(z)\Phi_{p^{m-2}}(z)\cdots \Phi_p(z)\Phi_1(z).$$
So we conclude that
\begin{equation}\label{cyclotomic}
\Phi_{p^m}(z)=\frac{z^{p^m}-1} {z^{p^{m-1}}-1} =\sum_{j=0}^{p-1}(z^{p^{m-1}})^j.
\end{equation}

For ease of illustration, we first consider permutation $\pi$ to be identity, and describe the connection between $\Phi_{p^m}(z)|F(z)$ and probabilistic expression.

\begin{lemma}\label{lemma}
Let $f(x_1,x_2,\cdots ,x_n)$ be a function from $\mathbb{F}^n_p$ to $\mathbb{F}_p$. Then $\Phi_{p^m}(z)|F(z)$ if and only if
$$X_m\rightarrow X_1, X_2, \cdots, X_{m-1}\rightarrow f(X_1,X_2,\cdots ,X_n)$$
is a markov chain, or alternatively, for $\forall t\in \mathbb{F}_p$,
$$P_r \left(f(\mathbf{x})=t|x_1,\cdots ,x_{m-1},x_m \right)=P_r \left(f(\mathbf{x})=t|x_1,\cdots ,x_{m-1} \right).$$
\end{lemma}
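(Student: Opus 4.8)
The plan is to translate the polynomial divisibility $\Phi_{p^m}(z)\mid F(z)$ into a periodicity condition on a ``folded'' sequence and then read off the probabilistic statement from that periodicity. First I would reduce $F(z)$ modulo $z^{p^m}-1$. Since $\Phi_{p^m}(z)$ divides $z^{p^m}-1$, the condition $\Phi_{p^m}(z)\mid F(z)$ is unaffected if $F(z)$ is replaced by its remainder $\tilde F(z)=\sum_{\kappa=0}^{p^m-1}G(\kappa)z^{\kappa}$, where $G(\kappa)=\sum_{k\equiv \kappa \,(\mathrm{mod}\ p^m)}\omega^{f(k)}$. Writing $k=\sum_{i=1}^{n}x_i p^{i-1}$, the residue $\kappa$ records exactly the low digits $x_1,\dots,x_m$, so $G(\kappa)=\sum_{x_{m+1},\dots,x_n}\omega^{f(x_1,\dots,x_n)}=:g(x_1,\dots,x_m)$, the partial exponential sum over the free variables.

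Next, using the explicit shape (\ref{cyclotomic}), namely $\Phi_{p^m}(z)=\sum_{s=0}^{p-1}z^{s p^{m-1}}$, I would show that $\Phi_{p^m}(z)\mid \tilde F(z)$ holds if and only if $G$ is periodic with period $p^{m-1}$. Indeed, multiplying any $H(z)=\sum_{r=0}^{p^{m-1}-1}h_r z^{r}$ by $\Phi_{p^m}(z)$ distributes each $h_r$ over the exponents $r+s p^{m-1}$, $0\le s\le p-1$, which run without overlap through $\{0,\dots,p^m-1\}$; hence the products $H(z)\Phi_{p^m}(z)$ of degree $<p^m$ are precisely the polynomials whose coefficient sequence repeats with period $p^{m-1}$. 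Comparing with $\tilde F(z)$ and using a degree count for the quotient gives the equivalence in both directions. Because $\kappa \bmod p^{m-1}$ is encoded by $(x_1,\dots,x_{m-1})$, periodicity of $G$ is the same as $g(x_1,\dots,x_{m-1},x_m)$ being independent of the top variable $x_m$.

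It remains to pass from the single scalar identity ``$g$ is independent of $x_m$'' to the full probabilistic statement, and this is the step where I expect the real work to lie. Setting $n_t(x_1,\dots,x_m)=|\{(x_{m+1},\dots,x_n):f=t\}|$, we have $g=\sum_{t=0}^{p-1}\omega^{t}n_t$ with $\sum_t n_t=p^{n-m}$, while $P_r(f=t\mid x_1,\dots,x_m)=n_t/p^{n-m}$. The crux is that equality of this one complex number across two choices $x_m=a,b$ forces equality of the whole vector $(n_t)_t$, and here the non-binary case genuinely departs from the Boolean one. I would exploit that the differences $d_t=n_t(\cdots,a)-n_t(\cdots,b)$ are integers with $\sum_t d_t=0$, so $\sum_t\omega^{t}d_t=0$ says the integer polynomial $D(z)=\sum_t d_t z^{t}$ of degree $\le p-1$ vanishes at $\omega$; since $\Phi_p(z)=1+z+\cdots+z^{p-1}$ is the minimal polynomial of $\omega$ over $\mathbb{Q}$, this forces $D$ to be a constant multiple of $\Phi_p$, i.e.\ all $d_t$ equal, whence $\sum_t d_t=0$ makes them all vanish. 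Thus ``$g$ independent of $x_m$'' is equivalent to every $n_t$, equivalently every conditional probability, being independent of $x_m$.

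Finally I would assemble the conclusion: independence of each $P_r(f=t\mid x_1,\dots,x_m)$ from $x_m$ is, after averaging the uniform $x_m$ out via $P_r(f=t\mid x_1,\dots,x_{m-1})=\tfrac1p\sum_{x_m}P_r(f=t\mid x_1,\dots,x_m)$, exactly the identity $P_r(f=t\mid x_1,\dots,x_{m-1},x_m)=P_r(f=t\mid x_1,\dots,x_{m-1})$, which is the conditional independence of $f$ and $X_m$ given $X_1,\dots,X_{m-1}$, i.e.\ the asserted Markov-chain property. The anticipated obstacle is precisely the integrality-plus-fixed-total argument of the third paragraph; everything else is bookkeeping with $p$-adic digits and the explicit form (\ref{cyclotomic}) of $\Phi_{p^m}$.
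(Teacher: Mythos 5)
Your proposal is correct and follows essentially the same route as the paper's proof: fold $F(z)$ onto the low $m$ digits, use the explicit form $\Phi_{p^m}(z)=\sum_{j=0}^{p-1}z^{jp^{m-1}}$ to turn divisibility into the statement that the partial exponential sums $g(x_1,\dots,x_m)$ are independent of $x_m$, and then the integrality argument with $\Phi_p$ as the minimal polynomial of $\omega$ (equal differences summing to zero, hence zero) to pass to equality of counts and hence of conditional probabilities. The only difference is cosmetic: you characterize the multiples of $\Phi_{p^m}$ of degree less than $p^m$ as the period-$p^{m-1}$ coefficient sequences, whereas the paper substitutes $(z^{p^{m-1}})^{p-1}\equiv-\sum_{j=0}^{p-2}(z^{p^{m-1}})^j$ and invokes a degree count below $\varphi(p^m)$; the substance, including the step you correctly identify as the crux, matches the paper's proof.
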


\begin{proof}
Since
\begin{equation*}
F(z)=\sum^{p^n-1}_{k=0}\omega^{f(k)}z^k=\sum_\mathbf{x}\omega^{f(\mathbf{x})}\prod^n_{i=1}(z^{p^{i-1}})^{x_i},
\end{equation*}
we have
$$ \Phi_{p^m}(z)|F(z)\Longleftrightarrow F(z)\equiv 0~(\mbox{mod}\ \Phi_{p^m}(z))\Longleftrightarrow \sum_{\mathbf{x}}\omega^{f(\mathbf{x})}\prod^n_{i=1}(z^{p^{i-1}})^{x_i}\equiv 0 ~(\mbox{mod}\  \Phi_{p^m}(z)).$$
From the definition of the cyclotomic polynomial, we know
$$\Phi_{p^m}(z)|z^{p^i}-1, \mbox{for} \ \forall i\geq m, $$
so
\begin{equation}\label{middle}
\Phi_{p^m}(z)|F(z)\Longleftrightarrow \sum_{\mathbf{x}}\omega^{f(\mathbf{x})}\prod^m_{i=1}(z^{p^{i-1}})^{x_i}\equiv 0 ~(\mbox{mod}\  \Phi_{p^m}(z)).
\end{equation}
 Form the formula of the cyclotomic polynomial in (\ref{cyclotomic}), we have
$$(z^{p^{m-1}})^{p-1} \equiv -\sum_{j=0}^{p-2}(z^{p^{m-1}})^j ~(\mbox{mod}\  \Phi_{p^m}(z)).$$
Then the summation in (\ref{middle}) can be divided into two parts, where the first part is for $x_m\neq p-1$ and the second part is for $x_m=p-1$. Hence $\Phi_{p^m}(z)|F(z)$ is equivalent to
\begin{multline}
   \sum_{x_1,\cdots ,x_{m-1}} \sum_{x_m=0}^{p-2} \sum_{x_{m+1},\cdots ,x_n} \omega^{f(\mathbf{x})} \prod_{i=1}^m(z^{p^{i-1}})^{x_i}\\
   +\sum_{x_1,\cdots ,x_{m-1}} \sum_{x_{m+1},\cdots ,x_n} \omega^{f(\mathbf{x})_{|x_m=p-1}}\prod_{i=1}^{m-1}(z^{p^{i-1}})^{x_i}
   \left(-\sum_{j=0}^{p-2}(z^{p^{m-1}})^j \right)=0 ~(\mbox{mod}\  \Phi_{p^m}(z)). \nonumber
\end{multline}
Combining like terms about $z$, the above condition is equivalent to
\[\sum_{j=0}^{p-2}\sum_{x_1,\cdots ,x_{m-1}}\left(\sum_{x_{m+1},\cdots ,x_n} \left(\omega^{f(\mathbf{x})_{|x_m=j}}- \omega^{f(\mathbf{x})_{|x_m=p-1}}\right)\right)z^{jp^{m-1}+\sum_{i=1}^{m-1}p^{i-1}x_i}=0~(\mbox{mod}\  \Phi_{p^m}(z)). \]
Since  $jp^{m-1}+\sum_{i=1}^{m-1}p^{i-1}x_i$, the degree of the item $z^{jp^{m-1}+\sum_{i=1}^{m-1}p^{i-1}x_i}$,  must be less than Euler function $\varphi(p^m)=p^m-p^{m-1}$, which is the degree of $\Phi_{p^m}(z)$, for $j\leq p-2$, it follows that, for $0< x_1,x_2,\cdots ,x_{m-1}\leq p-1$, and $0<j\leq p-2$,
\[\sum_{x_{m+1},\cdots ,x_n} \left(\omega^{f(\mathbf{x})_{|x_m=j,x_1,\cdots ,x_{m-1}}}- \omega^{f(\mathbf{x})_{|x_m={p-1},x_1,\cdots ,x_{m-1}}}\right)=0,\]
which is equivalent to
\[ \sum_{t=0}^{p-1} \left(\# \{\mathbf{x}: f(\mathbf{x})_{|x_m=j,x_1,\cdots ,x_{m-1}}=t \}-\#
    \{\mathbf{x}: f(\mathbf{x})_{|x_m={p-1},x_1,\cdots ,x_{m-1}}=t \}\right)\omega^t=0,\]
where $\# \{ \cdot \}$ denotes the number of elements in the collection $ \{ \cdot \}$.

Recall that the minimal polynomial of $\omega$ is $\Phi_{p}(z)=1+z+z^2+\cdots +z^{p-1}$ with respect to integer polynomial ring, any polynomial $g(z)$ in integer polynomial ring with $g(\omega)=0$ must be a multiple of $\Phi_{p}(z)$. Therefore,
\begin{equation}\label{equation:eq1}
   \# \{\mathbf{x}: f(\mathbf{x})_{|x_m=j,x_1,\cdots ,x_{m-1}}=t \}-\# \{\mathbf{x}: f(\mathbf{x})_{|x_m={p-1},x_1,\cdots ,x_{m-1}}=t\}=c
\end{equation}
 for all $t$, where $c$ is a constant. Since
$$\sum_{t=0}^{p-1} \# \{\mathbf{x}: f(\mathbf{x})_{|x_m=j,x_1,\cdots ,x_{m-1}}=t\}=p^{n-m}$$
for $0<j\leq p-2$, and
  $$\sum_{t=0}^{p-1} \# \{\mathbf{x}: f(\mathbf{x})_{|x_m={p-1},x_1,\cdots ,x_{m-1}}=t\}=p^{n-m}, $$
we obtain that
\begin{equation}\label{equation:eq2}
\sum_{t=0}^{p-1}\left(\# \{\mathbf{x}: f(\mathbf{x})_{|x_m=j,x_1,\cdots ,x_{m-1}}=t\}
-\# \{\mathbf{x}: f(\mathbf{x}_{|x_m={p-1},x_1,\cdots ,x_{m-1}}=t\} \right)=0.
\end{equation}
It follows from (\ref{equation:eq1}) and (\ref{equation:eq2}) that $$\# \{\mathbf{x}: f(\mathbf{x})_{|x_m=j,x_1,\cdots ,x_{m-1}}=t\}-\# \{\mathbf{x}: f(\mathbf{x}_{|x_m={p-1},x_1,\cdots ,x_{m-1}}=t\}=0.$$
In other words,
$$P_r \left(f(\mathbf{x})=t|x_m=j,x_1,\cdots ,x_{m-1}\right)=P_r \left(f(\mathbf{x})=t|x_m={p-1},x_1,\cdots ,x_{m-1} \right)$$
for $\forall t$ and $\forall j$, i.e.,
$$P_r \left(f(\mathbf{x})=t|x_1,\cdots ,x_{m-1},x_m \right)=P_r \left(f(\mathbf{x})=t|x_1,\cdots ,x_{m-1}\right),$$
which complete the proof.
\end{proof}

We now prove Theorem \ref{theorem:t1} by applying permutation $\pi$ and Lemma \ref{lemma}.

\begin{proof}
From Lemma \ref{lemma}, we know that $\Phi_{p^m}(z)|F(z)$ is equivalent to
$$P_r \left(f(\mathbf{x})=t|x_m={j_1},x_1,\cdots ,x_{m-1}\right)=P_r \left(f(\mathbf{x})=t|x_m={j_2},x_1,\cdots ,x_{m-1} \right).$$
For $1\leq s\leq m-1$, $\Phi_{p^m}(z)|F_{\pi}(z)$ for all $\pi=(s,m)$ is equivalent to that $P_r \left( f(\mathbf{x})=t\right)$ does not depend on the values of $x_1, x_2, \cdots x_m$, i.e,
\[P_r \left( f(\mathbf{x})=t|x_1, x_2 \cdots x_m\right) =P_r \left(f(\mathbf{x})=t\right).\]
Then consider all the permutation $\pi$, we obtain
$$P_r \left( f(\mathbf{x})=t|x_{\pi(1)}, x_{\pi(2)}\cdots,x_{\pi(m)} \right)=P_r \left( f(\mathbf{x})=t \right),$$
which is exactly the definition of the $m$th-order correlation-immune function.
\end{proof}

\begin{definition}
A function $f$ is called a {\em symmetric function} if permuting its variables $(x_1,x_2,\cdots ,x_n)$ leads to itself.
\end{definition}
For symmetric function $f$, since $f=f_{\pi}$ for all permutation $\pi$, the Fourier spectral characterization of the correlation-immunity is much simpler.
\begin{corollary}\label{c1}
Let $f(x_1,x_2,\cdots ,x_n): \mathbb{F}_p^n \rightarrow \mathbb{F}_p$ be a symmetric function. Then $f$ is $m$th-order correlation-immune if and only if $$\mathcal{F}_f(p^{n-m})=0.$$
\end{corollary}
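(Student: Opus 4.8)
The plan is to derive this corollary directly from Theorem~\ref{theorem:t1}, or equivalently from the Fourier-spectral corollary established immediately above it (namely, that $f$ is $m$th-order correlation-immune if and only if $\mathcal{F}_{f_\pi}(p^{n-m})=0$ for every permutation $\pi$), by exploiting the defining property of a symmetric function. The key observation is that symmetry collapses the whole family of conditions indexed by $\pi$ into a single scalar equation, so essentially no new work is required beyond a clean specialization.

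First I would note that if $f$ is symmetric, then $f_\pi(x_1,\dots,x_n)=f(x_{\pi(1)},\dots,x_{\pi(n)})=f(x_1,\dots,x_n)$ holds identically for every permutation $\pi$, so $f_\pi=f$ as a function on $\mathbb{F}_p^n$. Since the associated polynomial $F(z)=\sum_{k}\omega^{f(k)}z^k$ is determined entirely by the sequence $\mathbf{f}$ that $f$ describes under the fixed lexicographic encoding $k=\sum_{i}x_i p^{i-1}$, and since $f_\pi$ describes exactly the same sequence as $f$, it follows that $F_\pi(z)=F(z)$ for every $\pi$.

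Next, evaluating at $z=\xi^{-p^{n-m}}$ and recalling the identity $\mathcal{F}_{f_\pi}(p^{n-m})=F_\pi(\xi^{-p^{n-m}})$ noted after Theorem~\ref{theorem:t1}, I obtain $\mathcal{F}_{f_\pi}(p^{n-m})=\mathcal{F}_f(p^{n-m})$ for every permutation $\pi$. Consequently the condition ``$\mathcal{F}_{f_\pi}(p^{n-m})=0$ for all $\pi$'' in the corollary to Theorem~\ref{theorem:t1} holds if and only if the single equation $\mathcal{F}_f(p^{n-m})=0$ holds. Combining this with that corollary yields the stated equivalence.

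Because the argument is a direct specialization, there is no serious obstacle; the only point deserving care is confirming that $f_\pi=f$ at the level of functions genuinely forces the associated polynomials to coincide. This rests on the fact that $F_\pi$ is constructed from $f_\pi$ through the very same lexicographic encoding used to build $F$ from $f$, so identical functions produce identical coefficient sequences and hence identical polynomials. Once this is in hand, the reduction of the spectral criterion to the single location $p^{n-m}$ is immediate.
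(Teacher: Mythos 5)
Your proposal is correct and follows exactly the paper's route: the paper likewise derives the corollary by noting that symmetry gives $f_\pi=f$ for every permutation $\pi$, so the family of conditions $\mathcal{F}_{f_\pi}(p^{n-m})=0$ in the corollary to Theorem~\ref{theorem:t1} collapses to the single condition $\mathcal{F}_f(p^{n-m})=0$. Your added care in checking that $f_\pi=f$ forces $F_\pi=F$ via the common lexicographic encoding is a detail the paper leaves implicit, but it is the same argument.
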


\begin{example}
For $p=3,n=4$, $f(x_1,x_2,x_3,x_4)=x_1x_2x_3+x_1x_2x_4+x_1x_3x_4+x_2x_3x_4+x_1x_2+x_2x_3+x_3x_4+x_1x_3+x_1x_4+x_2x_4$ is a symmetric function.

Calculate the Fourier spectrum of $f(x_1,x_2,x_3,x_4)$ at position $3^{4-1}$. We have
$$\mathcal{F}_f(3^{4-1})=0,$$
so $f$ is a first-order correlation-immune function.

Calculate the Fourier spectrum of $f(x_1,x_2,x_3,x_4)$ at position $3^{4-2}$. We have
$$\mathcal{F}_f(3^{4-2})\neq 0,$$
so $f$ is not a second-order correlation-immune function.

\end{example}

\section{Fourier Spectrum Characterization for  Resilient Function}\label{section:s5}
In cryptographic applications, for avoiding statistical dependence between the plaintext and the ciphertext, the function $f(x_1,x_2,\cdots x_n)$ is always required to be balanced. $f(x_1,x_2,\cdots x_n)$ is said to be $m$-resilient if $f(x_1,x_2,\cdots x_n)$ is balanced by fixing $m$ or fewer variables. From Section \ref{section:s4}, we know that
$f(x_1,x_2,\cdots ,x_n)$ is $m$-resilient function if and only if $\mathcal{F}_f(0)=0$ and $\mathcal{F}_{f_{\pi}}(p^{n-m})=0$ for all permutation $\pi$. Here we show more results on the associated polynomial of function $f(x_1,x_2,\cdots x_n)$.

\begin{theorem}
$f(x_1,x_2,\cdots x_n) : \mathbb{F}_p^n \rightarrow \mathbb{F}_p$ is $m$-resilient function if and only if $$(z^{p^m }-1)|F_{\pi}(z),$$ for all permutation $\pi$.
\end{theorem}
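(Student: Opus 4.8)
The plan is to reduce the claim to Theorem~\ref{theorem:t1} together with the factorization of $z^{p^m}-1$ into cyclotomic polynomials. Recall that $f$ is $m$-resilient precisely when $f$ is $m$th-order correlation-immune \emph{and} balanced, and that (as derived in the text) $z^{p^m}-1=\prod_{j=0}^{m}\Phi_{p^j}(z)$. Since the factors $\Phi_{p^0}(z),\Phi_{p^1}(z),\dots,\Phi_{p^m}(z)$ are monic integer polynomials that are pairwise coprime (they are distinct irreducible polynomials over the rationals, and $z^{p^m}-1$ has only simple roots), divisibility of $F_{\pi}(z)$ by the product is equivalent to simultaneous divisibility by each factor. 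Thus $(z^{p^m}-1)\mid F_{\pi}(z)$ for all $\pi$ if and only if $\Phi_{p^j}(z)\mid F_{\pi}(z)$ for all $\pi$ and all $0\le j\le m$.

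Next I would interpret each factor separately. For $1\le j\le m$, Theorem~\ref{theorem:t1} (with $m$ replaced by $j$) says that $\Phi_{p^j}(z)\mid F_{\pi}(z)$ for all $\pi$ is exactly the statement that $f$ is $j$th-order correlation-immune. It remains to read off the factor $\Phi_{p^0}(z)=\Phi_1(z)=z-1$. Here $\Phi_1(z)\mid F_{\pi}(z)$ if and only if $F_{\pi}(1)=0$; and since permuting the variables only reindexes the inputs, $F_{\pi}(1)=\sum_{k}\omega^{f_{\pi}(k)}=\sum_{\mathbf{x}}\omega^{f(\mathbf{x})}=\sum_{t=0}^{p-1}N_t\,\omega^{t}$, where $N_t=\#\{\mathbf{x}:f(\mathbf{x})=t\}$. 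Invoking the same minimal-polynomial argument used in the proof of Lemma~\ref{lemma} (the minimal polynomial of $\omega$ is $1+z+\cdots+z^{p-1}$), this sum vanishes if and only if $N_0=N_1=\cdots=N_{p-1}=p^{n-1}$, i.e.\ if and only if $f$ is balanced. Note that this condition does not depend on $\pi$, so the $j=0$ clause is simply balancedness of $f$.

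Combining these readings, $(z^{p^m}-1)\mid F_{\pi}(z)$ for all $\pi$ is equivalent to: $f$ is balanced and $f$ is $j$th-order correlation-immune for every $1\le j\le m$. The final step is to collapse the conjunction over $j$: taking $j=m$ yields $m$th-order correlation-immunity, and conversely $m$th-order correlation-immunity implies $j$th-order correlation-immunity for every $1\le j\le m$, since independence of $Z=f(X_1,\dots,X_n)$ from any $m$-subset of the $X_i$ forces independence from each of its $j$-subsets (marginalizing over the remaining $m-j$ variables). Hence the condition is equivalent to $f$ being balanced and $m$th-order correlation-immune, i.e.\ to $f$ being $m$-resilient, as required.

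The main obstacle I anticipate is justifying the \lq divides the product iff divides each factor\rq\ step, because $F_{\pi}(z)$ does not have rational coefficients but coefficients in $\mathbb{Z}[\omega]$. I would handle this by noting that each $\Phi_{p^j}(z)$ is monic with integer coefficients, so division keeps the quotient and remainder in $\mathbb{Z}[\omega][z]$, and that $z^{p^m}-1$ factors into these pairwise-coprime pieces with no repeated roots. Alternatively, one can bypass coprimality entirely by using the root criterion already established after Theorem~\ref{theorem:t1}, namely that $\Phi_{p^j}(z)\mid F_{\pi}(z)$ is equivalent to $\mathcal{F}_{f_{\pi}}(p^{n-j})=0$, and then verifying each of the $m+1$ evaluation points separately.
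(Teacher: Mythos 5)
Your proof is correct, but it follows a genuinely different route from the paper's. The paper argues directly, in parallel with the proof of Lemma~\ref{lemma}: it reduces $F(z)$ modulo $z^{p^m}-1$ (using $z^{p^{i}}\equiv 1$ for $i\ge m$), so that $F(z)$ collapses to $\sum_{x_1,\cdots,x_m}\left(\sum_{x_{m+1},\cdots,x_n}\omega^{f(\mathbf{x})}\right)z^{\sum_{i=1}^{m}p^{i-1}x_i}$, whose exponents are distinct and smaller than $p^m$; divisibility therefore forces every subcube sum $\sum_{x_{m+1},\cdots,x_n}\omega^{f(\mathbf{x})}$ to vanish, and the minimal-polynomial argument for $\omega$ converts this into ``$f$ is balanced on every subcube obtained by fixing $x_1,\cdots,x_m$,'' which after applying permutations is exactly the fixing-variables definition of $m$-resiliency used in Section~\ref{section:s5}. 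You instead factor $z^{p^m}-1=\prod_{j=0}^{m}\Phi_{p^j}(z)$, invoke Theorem~\ref{theorem:t1} as a black box for each factor $\Phi_{p^j}$ with $1\le j\le m$, read the factor $\Phi_1(z)=z-1$ as balancedness via $F_{\pi}(1)=\sum_{t}N_t\omega^{t}$, and collapse the conjunction using the hierarchy ``$m$th-order correlation-immunity implies $j$th-order for $j\le m$.'' Your route is more modular and makes transparent what each cyclotomic factor encodes ($\Phi_1$ is balancedness, $\Phi_{p^j}$ is $j$th-order immunity), but it needs two auxiliary facts the paper's computation never touches: that divisibility by the product of the pairwise-coprime polynomials $\Phi_{p^j}$ over $\mathbb{Z}[\omega][z]$ is equivalent to simultaneous divisibility by each factor (you justify this correctly, via monic divisors and disjoint root sets), and the correlation-immunity hierarchy (which you correctly prove by marginalizing over the extra variables). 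The paper's one-stroke reduction avoids both and lands directly on the subcube-balancedness formulation; your decomposition, in exchange, explains structurally why $z^{p^m}-1$ is the right modulus for resiliency while $\Phi_{p^m}$ alone captures only correlation-immunity. One small caveat: your closing remark that the root criterion ``bypasses coprimality entirely'' overstates the shortcut, since passing from vanishing at the $m+1$ points $\xi^{-p^{n-j}}$ back to divisibility by $z^{p^m}-1$ still requires combining the coprime factors; this does not affect your main argument.
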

\begin{proof}
For ease of illustration, first we consider permutation $\pi$ to be an identity.
\begin{align}\label{equation:eq3}
     (z^{p^m}-1)|F(z)
     &\Longleftrightarrow F(z)\equiv 0~(\mbox{mod}~z^{p^m}-1) \nonumber\\
     &\Longleftrightarrow \sum_{\mathbf{x}}\omega^{f(\mathbf{x})}\prod^n_{i=1}(z^{p^{i-1}})^{x_i}\equiv 0 ~(\mbox{mod}~z^{p^m}-1) \nonumber\\
     &\Longleftrightarrow \sum_{\mathbf{x}}\omega^{f(\mathbf{x})}\prod^m_{i=1}(z^{p^{i-1}})^{x_i}=0\nonumber\\
     &\Longleftrightarrow\sum_{x_1,\cdots ,x_m}\left(\sum_{x_{m+1},\cdots ,x_n}\omega^{f(\mathbf{x})}\right)z^{\sum^{m}_{i=1}p^{i-1}x_i}=0\nonumber\\
     &\Longleftrightarrow \sum_{x_{m+1},\cdots ,x_n}\omega^{f(\mathbf{x})}=0~\mbox{for}~ \mbox{any}~ \mbox{fixed}~x_1,x_2,\cdots ,x_m.
\end{align}
Similar to proof of Theorem~\ref{theorem:t1}, we obtain
$$\# \{\mathbf{x}: f(\mathbf{x})_{|x_1,\cdots ,x_{m-1}}=t \}$$
all have the same value for $0\leq t\leq p-1$. Then $f(\mathbf{x})$ is balanced for any fixed $x_1,x_2,\cdots ,x_m$. It is obvious that $f(\mathbf{x})$ is balanced for any fixed $x_{\pi(1)},x_{\pi(2)},\cdots ,x_{\pi(m)}$ when we apply permutation $\pi$. Thus, $f(\mathbf{x})$ is $m$-resilient function.
\end{proof}

\section{Conclusions}\label{section:s6}
In this paper, we introduced the discrete Fourier transform over the complex field of the function $f(\mathbf{x}): \mathbb{F}_p^n \rightarrow \mathbb{F}_p$, and obtained the Fourier spectral characterization of correlation-immune and resilient functions. That is,
\begin{enumerate}
  \item $f(\mathbf{x})$ is $m$-order correlation-immune if and only if $\mathcal{F}_{f_{\pi}}(p^{n-m})=0$ for all permutation $\pi$.
  \item If $f(\mathbf{x})$ is a symmetric function, $f(\mathbf{x})$ is $m$-order correlation-immune if and only if $\mathcal{F}_f(p^{n-m})=0$.
  \item $f(\mathbf{x})$ is $m$-resilient if and only if $\mathcal{F}_f(0)=0$ and $\mathcal{F}_{f_{\pi}}(p^{n-m})=0$ for all permutation $\pi$.
  \item The method used in \cite{WangDiscrete} cannot be extended to prove the result for $p>2$. However,  Fourier spectral characterization for Boolean functions in \cite{WangDiscrete} can be regarded as a  special case of the results in this paper for $p=2$.
\end{enumerate}
\bibliographystyle{spmpsci}      
\bibliography{biblio}   


\end{document}